\newtheorem{theorem}{Theorem}
\theoremstyle{definition}
\newtheorem{example}{Example}
\newtheorem{definition}{Definition}
\renewcommand*\env@matrix[1][c]{\hskip -\arraycolsep
  \let\@ifnextchar\new@ifnextchar
  \array{*\c@MaxMatrixCols #1}}
\newcommand{\Rb}{\mathbbmss{R}}
\newcommand{\Xc}{\mathcal{X}}
\newcommand{\Kc}{\mathcal{K}}
\newcommand{\Sc}{S^{\sf c}}
\newcommand{\Kcc}{\mathcal{K}^{\sf c}}
\title{Capacity of Coded Index Modulation}
\author{
\IEEEauthorblockN{Lakshmi~Natarajan, Yi~Hong, and Emanuele~Viterbo}%
\IEEEauthorblockA{Department of Electrical \& Computer Systems Engineering\\
Monash University, Clayton, VIC 3800, Australia \\
\{lakshmi.natarajan, yi.hong, emanuele.viterbo\}@monash.edu}%
}
\begin{document}

\maketitle

\begin{abstract}
We consider the special case of index coding over the Gaussian broadcast channel where each receiver has prior knowledge of a subset of messages at the transmitter and demands all the messages from the source. 
We propose a concatenated coding scheme for this problem, using an index code for the Gaussian channel as an inner code/modulation to exploit side information at the receivers, and an outer code to attain coding gain against the channel noise.
We derive the capacity region of this scheme by viewing the resulting channel as a multiple-access channel with many receivers, and relate it to the \emph{side information gain} -- which is a measure of the advantage of a code in utilizing receiver side information -- of the inner index code/modulation.
We demonstrate the utility of the proposed architecture by simulating the performance of an index code/modulation concatenated with an off-the-shelf convolutional code through bit-interleaved coded-modulation. 
\end{abstract}

\section{Introduction} \label{sec:1}

We consider coding schemes for the index coding problem~\cite{YBJK_IEEE_IT_11,ALSWH_FOCS_08,RSG_IEEE_IT_10} over an additive white Gaussian noise~(AWGN) broadcast channel. The sender encodes $K$ independent messages to be broadcast to a set of receivers, each of which demands a subset of the transmit messages while having the prior knowledge of the values of a different subset of messages as side information. The exact capacity region of this Gaussian version of the index coding problem~\cite{Wu_ISIT_07,KrS_ITW_07,SiC_ISIT_14,AOJ_ISIT_14,Tun_IEEE_IT_06}, with general message requests and side informations, is known only for the two receiver case~\cite{Wu_ISIT_07,KrS_ITW_07}.

\let\thefootnote\relax\footnotetext{\copyright~2015 IEEE. Personal use of this material is permitted. Permission from IEEE must
be obtained for all other uses, including reprinting/republishing this material for
advertising or promotional purposes, collecting new collected works for resale or
redistribution to servers or lists, or reuse of any copyrighted component of this work
in other works.} 

In this paper, we consider the special case of Gaussian index coding where every receiver demands all the messages at the source, the capacity of which is given in~\cite{Tun_IEEE_IT_06}. 
Let each receiver be denoted by the tuple $({\sf SNR},S)$, where ${\sf SNR}$ is the receiver signal-to-noise ratio, and $S \subset \{1,\dots,K\}$ is the subset of the indices of the messages known at the receiver. A rate of $R_1,\dots,R_K$ bits per real channel use (i.e., bits per real dimension, denoted~b/dim) is achievable for the $K$ messages if and only if for every receiver $({\sf SNR},S)$ we have~\cite{Tun_IEEE_IT_06}
\begin{align} \label{eq:capacity_unconstrained}
{\textstyle \sfrac{1}{2} \log_2\left( 1 + {\sf SNR} \right) > R - R_S},
\end{align} 
where \mbox{$R=\sum_{k=1}^{K}R_k$} is the sum rate at the source, and \mbox{$R_S=\sum_{k \in S}R_k$} is the \emph{side information rate} at the receiver.
From~\eqref{eq:capacity_unconstrained}, at high message rates, the availability of side information corresponding to $S$ reduces the minimum required ${\sf SNR}$ from approximately $2^{2R}$ to $2^{2(R-R_S)}$, or equivalently, by a factor of \mbox{$10 \log_{10}\left(2^{2R_S}\right) \approx 6R_S$~dB}. Hence, a good index code $\Xc$ is
\begin{inparaenum} [\itshape i\upshape)]
\item a capacity achieving code in the classical point-to-point AWGN channel; and
\item converts each bit per dimension of side information rate into an apparent ${\sf SNR}$ gain of approximately $6$~dB.
\end{inparaenum}

In~\cite{XFKC_CISS_06,BaC_ITW_11,MLV_PIMRC_12}, binary codes were constructed that admit improvement in error performance with the availability of side information at the receivers. 
Index codes for the Gaussian broadcast channel were constructed in~\cite{NHV_arxiv_14} that transform receiver side information into apparent ${\sf SNR}$ gains, and the notion of \emph{side information gain} was introduced as a measure of their efficiency. 
Designing a good index code is equivalent to constructing a code $\Xc$ with a large minimum Euclidean distance, in order to maximize the channel coding gain, and a large side information gain $\Gamma(\Xc)$, to maximize the gain available from utilizing receiver side information.
Although the index codes of~\cite{NHV_arxiv_14} have a large side information gain, they are not efficient against the channel noise. Hence, the known index codes, as such, are not sufficient to achieve near-capacity performance in the Gaussian broadcast channel.

In this paper, we propose a concatenated scheme, called \emph{coded index modulation}, that simultaneously achieves both coding gain and side information gain. In our scheme, the $K$ information streams are encoded independently using strong outer codes, and the resulting codewords are modulated using an inner index code. While the outer codes provide error resilience, the inner index code, henceforth referred to as \emph{index modulation}, allows the receivers to exploit side information.
We derive the capacity region of coded index modulation by viewing the resulting channel as a multiple-access channel with many receivers~\cite{Ulr_InfCont_75}, and relate it to the side information gain of the index modulation (Section~\ref{sec:4}).
We illustrate the simultaneous achievability of both coding gain and side information gain by simulating the performance of a 64-QAM index modulation coded using a 16-state convolutional code through bit-interleaved coded-modulation~\cite{CTB_IT_98,LiR_ElecLet_98} (Section~\ref{sec:5}). The system model is introduced in Section~\ref{sec:2}.

\emph{Notation:} Vectors are denoted by bold lower case letters.
Random variables are denoted by plain upper case letters (eg. $X$), while a particular instance of a random variable is denoted using a lower case font (eg. $x$).
The symbol $S^\mathsf{c}$ denotes the complement of the set $S$, and $\varnothing$ is the empty set. 

\section{Index codes for Gaussian broadcast channel} \label{sec:2}

In this section, we introduce the channel model, and review the notion of \emph{side information gain}~\cite{NHV_arxiv_14} which is a measure of the efficiency of a coding scheme in Gaussian broadcast channels with receiver side information. 

Consider a transmitter with $K$ independent messages $x_1,\dots,x_K$, taking values from finite alphabets $\Xc_1,\dots,\Xc_K$, respectively, to be broadcast to a set of receivers, each of which demands all the messages at the source. 
Let $({\sf SNR},S)$ denote a receiver with signal-to-noise ratio ${\sf SNR}$ and the prior knowledge of the subset of information symbols \mbox{$\pmb{x}_S \triangleq (x_k,k \in S)$}, for some \mbox{$S \subset \{1,\dots,K\}$}, as side information.

An $n$-dimensional \emph{index code} $(\rho,\Xc)$ for $\Xc_1,\dots,\Xc_K$ consists of a codebook \mbox{$\Xc \subset \Rb^n$}, and an encoding function \mbox{$\rho: \Xc_1 \times \cdots \times \Xc_K \to \Xc$}, where \mbox{$\pmb{x}=\rho(x_1,\dots,x_K)$} is the transmit signal. 
The resulting spectral efficiency for the $k^{\text{th}}$ message is $R_k = \sfrac{1}{n} \, \log_2 |\Xc_k|$ bits per dimension (b/dim). 
The encoding operation is independent of the number of receivers and the side information available to each of them. 
Indeed, the capacity-achieving scheme of~\cite{Tun_IEEE_IT_06} does not utilize this information at the encoder.

Given the channel output \mbox{$\pmb{y} = \pmb{x} + \pmb{z}$}, where $\pmb{z}$ is the additive white Gaussian noise, and the side information \mbox{$\pmb{x}_S=\pmb{a}_S$}, i.e., \mbox{$x_k=a_k$} for \mbox{$k \in S$}, the maximum-likelihood decoder at the receiver $({\sf SNR},S)$ restricts its choice of transmit vectors to the subcode \mbox{$\Xc_{\pmb{a}_S} \subset \Xc$} obtained by expurgating all the codewords corresponding to \mbox{$\pmb{x}_S \neq \pmb{a}_S$}.
Decoding $\pmb{y}$ to the subcode $\Xc_{\pmb{a}_S}$, instead of $\Xc$, may improve the minimum distance between the valid codewords, and hence the error performance of the receiver $({\sf SNR},S)$ over a receiver with no side information. 
Let $d_0$ be the minimum Euclidean distance between any two vectors in $\Xc$, $d_{\pmb{a}_S}$ be the minimum distance of the codewords in $\Xc_{\pmb{a}_S}$, and $d_S$ be the minimum value of $d_{\pmb{a}_S}$ over all possible values of $\pmb{a}_S$.
At high ${\sf SNR}$, the side information corresponding to $S$ provides an ${\sf SNR}$ gain of approximately $10\log_{10}\left( \sfrac{d_S^2}{d_0^2} \right)$~dB. Normalizing by the \emph{side information rate}~\cite{NHV_arxiv_14} \mbox{$R_S \triangleq \sum_{k \in S}R_k$}, we see that each bit per dimension of side information provides an apparent gain of $\sfrac{1}{R_S} \, {10 \log_{10}\left(\sfrac{d_S^2}{d_0^2} \right)}$~dB.
We are interested in coding schemes that provide large ${\sf SNR}$ gains for every 
\mbox{$S \subset \{1,\dots,K\}$}.

\begin{definition}[\cite{NHV_arxiv_14}]
The \emph{side information gain} of an index code $(\rho,\Xc)$ is 
\begin{align} \label{eq:Gamma}
\Gamma(\Xc) \triangleq \min_{S \subset \{1,\dots,K\}} \frac{10 \log_{10} \left( \sfrac{d_S^2}{d_0^2} \right)}{R_S} \text{ dB/b/dim}.
\end{align} 
\end{definition}

By using the normalizing factor $R_S$ in~\eqref{eq:Gamma}, the Euclidean distance gain is measured with reference to the amount of side information available at a receiver.
The asymptotic ${\sf SNR}$ gain due to the prior knowledge of $\pmb{x}_S$ is at least \mbox{$\Gamma(\Xc) \times R_S$~dB}, and a large value of $\Gamma(\Xc)$ simultaneously maximizes this gain for every choice of \mbox{$S \subset \{1,\dots,K\}$}.
Note that $\Gamma$ is a relative measure of the performance of the index code in a broadcast channel with receiver side information, computed with respect to the baseline performance of $\Xc$ in a point-to-point AWGN channel with no side information at the receiver ($S = \varnothing$).

\begin{example} \label{ex:16QAM_1}
\begin{figure}[!t]
\centering
\includegraphics[totalheight=2in,width=2in]{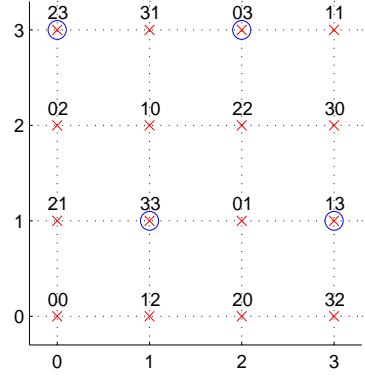}
\caption{The constellation points (crosses) of the $16$-QAM index code labelled using the tuple of input symbols $(x_1,x_2)$. The four points marked with circles constitute the subcode corresponding to the side information \mbox{$x_2=3$}.
}
\label{fig:16QAM_labels}
\vspace{-3mm}
\end{figure} 
Let \mbox{$K=2$}, \mbox{$\Xc_1=\Xc_2=\{0,1,2,3\}$}, and \mbox{$\Xc \subset \Rb^2$} be the $16$-QAM constellation. The dimension of the code $\Xc$ is \mbox{$n=2$}, and the two sources have message rates \mbox{$R_1=R_2=1$~b/dim}. Consider the encoder \mbox{$\rho: \Xc_1 \times \Xc_2 \to \Xc$} given by
\begin{align} \label{eq:ex:16QAM_1:encoding}
\pmb{x} = \rho(x_1,x_2) = \left(x_1 + 2x_2,2x_1 + x_2\right) {\rm~mod}~4, 
\end{align} 
where the ${\rm~mod}~4$ operation is performed component-wise.
Fig.~\ref{fig:16QAM_labels} shows the transmit constellation $\Xc$ where the signal points are labelled with the values of the corresponding tuple of information symbols \mbox{$(x_1,x_2)$}. The minimum Euclidean distance between any two distinct points in $\Xc$ is \mbox{$d_0=1$}. Now suppose the side information available at a receiver with \mbox{$S=\{2\}$} is \mbox{$x_2=3$}. 
From~\eqref{eq:ex:16QAM_1:encoding}, the set of Euclidean coordinates of all the codewords with \mbox{$x_2=3$} is \mbox{$\{(0,3),(1,1),(2,3),(3,1)\} \subset \Xc$}, with the corresponding minimum Euclidean distance $2$. 
Hence, the availability of the side information \mbox{$x_2=3$} increases the minimum distance between the valid codewords to $2$ from \mbox{$d_0=1$}.
Similarly, using direct computation, we obtain \mbox{$d_S=2$} for both \mbox{$S=\{1\}$} and \mbox{$S=\{2\}$}. 
From~\eqref{eq:Gamma}, the side information gain of this code is
 $\Gamma = {10 \log_{10}\left(\sfrac{2^2}{1^2}\right)} \approx 6$~dB/b/dim.
Fig.~\ref{fig:comparison_all} in Section~\ref{sec:5} includes the simulated error performance of $\Xc$ versus ${\sf SNR}$ for three receivers, corresponding to \mbox{$S=\{1\},\{2\}$} and $\varnothing$, respectively.
From Fig.~\ref{fig:comparison_all}, the prior knowledge of either $x_1$ or $x_2$ at a receiver provides an ${\sf SNR}$ gain of approximately $6.5$~dB over \mbox{$S=\varnothing$}, which is consistent with the squared distance gain of $\Gamma(\Xc) \times R_S \approx 6$~dB for $S=\{1\},\{2\}$.\hfill\IEEEQED
\end{example}

\section{Capacity of coded index modulation} \label{sec:4}

The index codes constructed in~\cite{NHV_arxiv_14} have large values of $\Gamma$, and hence, are efficient in exploiting the side information available at the receivers.
However, as in Example~\ref{ex:16QAM_1}, the transmit codebook $\Xc$ may not be an efficient channel code in the traditional single-user AWGN channel with no receiver side information (see~Fig.~\ref{fig:comparison_all}). 
Hence, the index codes of~\cite{NHV_arxiv_14}, as such, may be inadequate to achieve near-capacity performance in the broadcast channel with receiver side information.

We improve the coding gain against channel noise by coding the $K$ message streams independently using strong outer codes over the alphabets $\Xc_1,\dots,\Xc_K$, respectively, and concatenating the encoded streams in the signal space using the  encoding map $\rho$. 
This is illustrated in Fig.~\ref{fig:outer_coded_channel}, where $W_1,\dots,W_K$ are the information symbols, $\mathcal{E}_1,\dots,\mathcal{E}_K$ are the channel encoders, and $X_1,\dots,X_K$ are their coded outputs at a given time instance, which are jointly modulated by $\rho(\cdot)$ into a transmit symbol $X$. The symbol $Y_S$ denotes the channel output at the receiver `${\sf Rx}_S$' which has side information \mbox{$W_S=(W_k,k \in S)$}.
If the $K$ outer codes have a minimum Hamming distance of $d_{H}$ over the alphabets $\Xc_1,\dots,\Xc_K$, respectively, then the minimum squared Euclidean distance between the valid codewords is at least \mbox{$d_H \times d_S^2$} at the receiver ${\sf Rx}_S$.
While the outer codes provide coding gain against channel noise, the \emph{index modulation} $(\rho,\Xc)$ ensures that the receiver performance improves with the availability of side information.
In order to measure the efficiency of this coded index modulation, 
we derive its capacity region, and investigate its dependence on the side information gain $\Gamma(\Xc)$.

\begin{figure}[!t]
\centering
\includegraphics[totalheight=1.125in,width=3.7in]{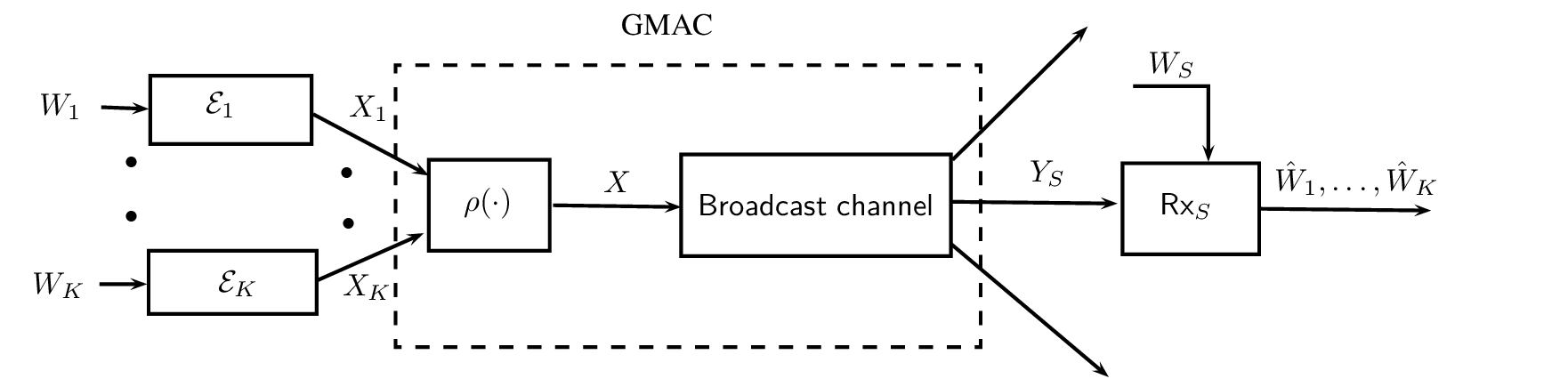}
\caption{Broadcast channel with coded index modulation.}
\label{fig:outer_coded_channel}
\vspace{-5mm}
\end{figure}

\subsection{Capacity region}


If the broadcast channel contains more than one receiver ${\sf Rx}_S$ with the same side information index set $S$, we only consider the node with the least ${\sf SNR}$ among all such receivers. 
We assume the most restrictive scenario, with \mbox{$2^K-1$} receivers, one for each possible side information configuration \mbox{$S \subset \{1,\dots,K\}$}, including \mbox{$S=\varnothing$} and excepting \mbox{$S=\{1,\dots,K\}$}. 
The ${\sf SNR}$ at ${\sf Rx}_{S}$ is denoted by ${\sf SNR}_S$.
The following working assumption, used in the proof of Theorem~\ref{thm:cap}, will simplify the analysis by reducing the number of hyperplanes that define the capacity region:
\begin{align} \label{eq:working_assumption}
{\sf SNR}_{S} \geq {\sf SNR}_{S'} \text{ for every } S \subset S'.
\end{align} 
Since ${\sf Rx}_{S}$ has less side information than ${\sf Rx}_{S'}$, it has a higher minimum ${\sf SNR}$ requirement for any given tuple of achievable message rates $(R_1,\dots,R_K)$. 
Hence, it is reasonable to assume that ${\sf Rx}_{S}$ is operating at a higher ${\sf SNR}$ than ${\sf Rx}_{S'}$.

A \emph{general multiple access channel~(GMAC)}~\cite{Ulr_InfCont_75} with $K$ transmitters and $L$ receivers consists of a memoryless channel $p(\tilde{y}_1,\dots,\tilde{y}_L|x_1,\dots,x_K)$ with inputs $X_1,\dots,X_K$ and channel outputs $\widetilde{Y}_1,\dots,\widetilde{Y}_L$. 
The broadcast channel with coded index modulation, as shown in Fig.~\ref{fig:outer_coded_channel}, can be viewed as a GMAC by considering the $K$ information sources as $K$ independent transmitters, and the function $\rho$ as a part of the physical channel.
The resulting GMAC has \mbox{$L=2^K-1$} receivers that are indexed by \mbox{$S$}.
Receiver side information can be absorbed in the GMAC channel model by setting
\begin{align} \label{eq:Y_tilde}
\widetilde{Y}_S \triangleq (Y_S,X_S),
\end{align} 
where \mbox{$X_S \triangleq (X_k,k \in S)$}, i.e., by considering the side information as part of the channel output.
We now recall the capacity region~\cite{Ulr_InfCont_75} of a GMAC using the current notation.
Let the random variables \mbox{$X_1,\dots,X_K$} be independently distributed on the $n$-dimensional alphabets $\Xc_1,\dots,\Xc_K$ with probability distributions $p(x_1),\dots,p(x_K)$, respectively.

\begin{theorem}[\cite{Ulr_InfCont_75}] \label{thm:GMAC}
For a given set of input distributions $p(x_1),\dots,p(x_K)$, 
the capacity region of the GMAC 
is the set of all rate tuples $(R_1,\dots,R_K)$ satisfying
\begin{align*}
\sum_{k \in \Kc} R_k \leq \frac{1}{n} I(X_{\Kc};\widetilde{Y}_{S}|X_{\Kcc}) \text{ for all } \Kc,S \subset \{1,\dots,K\}.
\end{align*}
\end{theorem}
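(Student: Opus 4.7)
The plan is to reduce the GMAC to a collection of $L=2^K-1$ classical multiple access channels, one per receiver $S$, and argue that since every receiver must decode all $K$ messages, the GMAC capacity region is exactly the intersection of the per-receiver single-user MAC capacity regions. For each fixed $S$, the channel from $(X_1,\dots,X_K)$ to the augmented output $\widetilde{Y}_S=(Y_S,X_S)$ is a standard memoryless MAC, and under the product input distribution $\prod_k p(x_k)$ its capacity region is the polymatroid cut out by $\sum_{k\in\Kc} R_k \le \tfrac{1}{n}\,I(X_\Kc;\widetilde{Y}_S\mid X_{\Kcc})$ for every $\Kc\subset\{1,\dots,K\}$. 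Intersecting these regions over $S$ then yields precisely the claim.

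For achievability, I would use the standard random-coding argument for the MAC, adapted to many receivers. For each user $k$, generate $2^{nNR_k}$ codewords of $N$ super-symbols by drawing i.i.d.\ from $p(x_k)$, and index them by the messages $W_k$. At receiver $S$, apply joint-typicality decoding to $(X_1^{(N)},\dots,X_K^{(N)},\widetilde{Y}_S^{(N)})$. The usual partitioning of error events by the set $\Kc$ of misdecoded messages shows that the error probability at receiver $S$ vanishes as $N\to\infty$ whenever $\sum_{k\in\Kc}R_k < \tfrac{1}{n}\,I(X_\Kc;\widetilde{Y}_S\mid X_{\Kcc})$ for all nonempty $\Kc$. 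A union bound over the finitely many $S$ then guarantees that every receiver decodes correctly with high probability \emph{simultaneously}, so every rate tuple in the interior of the intersection region is achievable by a single codebook construction.

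For the converse, apply Fano's inequality at each receiver $S$ separately. For any $\Kc$, decodability of $W_\Kc$ from $\widetilde{Y}_S^{(N)}$ given $W_{\Kcc}$ gives $nN\sum_{k\in\Kc}R_k \le I(W_\Kc;\widetilde{Y}_S^{(N)}\mid W_{\Kcc}) + N\epsilon_N$, which single-letterizes via the memoryless property, the chain rule, and the independence of the inputs $W_1,\dots,W_K$ to the required bound $\sum_{k\in\Kc}R_k \le \tfrac{1}{n}\,I(X_\Kc;\widetilde{Y}_S\mid X_{\Kcc})$. Because each receiver $S$ independently imposes its own full collection of $2^K-1$ inequalities, the intersection of the per-receiver polymatroids is necessary as well as sufficient.

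The one place that deserves a second glance is that a \emph{single} codebook must meet all $L$ receivers' decoding criteria at once; this is handled cleanly by the union bound over $S$, since $L=2^K-1$ is independent of the blocklength $N$. The appearance of $X_S$ inside $\widetilde{Y}_S$ is harmless because the product distribution $\prod_k p(x_k)$ enforces on $X_S$ the same marginal used in the decoder's typicality test and in the mutual-information expression, so no conditioning trick beyond the classical Ulrey/Cover MAC argument is needed.
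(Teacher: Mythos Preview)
The paper does not prove this theorem at all; it is quoted as a known result from Ulrey~\cite{Ulr_InfCont_75} and then invoked as a black box in the proof of Theorem~\ref{thm:cap}. Your sketch is the standard argument for the many-receiver MAC (random coding with joint-typicality decoding and a union bound over the $2^K-1$ receivers for achievability; Fano per receiver for the converse), which is essentially Ulrey's original approach, so there is nothing to compare against in this paper.

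One small caveat on your converse step: single-letterizing $I(W_\Kc;\widetilde{Y}_S^{(N)}\mid W_{\Kcc})$ does not in general land you on the \emph{given} product law $\prod_k p(x_k)$; a true converse only shows that achievable rates lie in the union, over product input distributions and a time-sharing variable, of such polymatroids. The theorem as stated here is really describing the achievable pentagon for a \emph{fixed} $\prod_k p(x_k)$, so the ``converse'' direction in that restricted sense is just the per-receiver MAC outer bound specialized to that distribution. This is a phrasing issue rather than a genuine gap in your reasoning.
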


Using~\eqref{eq:working_assumption} and~\eqref{eq:Y_tilde} we apply Theorem~\ref{thm:GMAC} to coded index modulation, and simplify the capacity region by reducing the number of constraints from $(2^K-1)^2$ to $2^K-1$.

\begin{theorem} \label{thm:cap}
Let the input distributions $p(x_1),\dots,p(x_K)$ be fixed.
A rate tuple $(R_1,\dots,R_K)$ is achievable using coded index modulation over a broadcast channel (with receiver side information) satisfying~\eqref{eq:working_assumption} if and only if 
\begin{align} \label{eq:capacity_under_assumption}
\sum_{k \in \Sc} R_k \leq \frac{1}{n} I(X_{\Sc};Y_S|X_S) \text{ for all } S \subset \{1,\dots,K\}.
\end{align} 
\end{theorem}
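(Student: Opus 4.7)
The plan is to apply Theorem~\ref{thm:GMAC} to the GMAC in Fig.~\ref{fig:outer_coded_channel} with $\widetilde{Y}_S$ as in~\eqref{eq:Y_tilde}, and then use the SNR ordering~\eqref{eq:working_assumption} to show that the $(2^K-1)^2$ constraints of Theorem~\ref{thm:GMAC} collapse to the $2^K-1$ constraints in~\eqref{eq:capacity_under_assumption}. For necessity, I would specialize Theorem~\ref{thm:GMAC} to $\Kc=\Sc$ (so $\Kcc=S$) and use~\eqref{eq:Y_tilde}, which gives
\[
I(X_{\Sc};\widetilde{Y}_S|X_S)=I(X_{\Sc};Y_S,X_S|X_S)=I(X_{\Sc};Y_S|X_S),
\]
exactly matching~\eqref{eq:capacity_under_assumption}.

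For sufficiency, I would show that every $(\Kc,S)$ constraint of Theorem~\ref{thm:GMAC} is implied by~\eqref{eq:capacity_under_assumption}. Decompose $\Kc$ according to its overlap with $S$: let $A=\Kc\cap S$, $B=\Kc\cap\Sc$, and $D=\Kcc\cap\Sc$, so that $B^{\sf c}=S\cup D$. A chain-rule calculation, using independence of $X_1,\dots,X_K$, simplifies the GMAC bound as
\begin{align*}
I(X_{\Kc};\widetilde{Y}_S|X_{\Kcc}) &= I(X_{\Kc};X_S|X_{\Kcc})+I(X_{\Kc};Y_S|X_S,X_{\Kcc})\\
&= H(X_A)+I(X_B;Y_S|X_S,X_D).
\end{align*}
The first term collapses to $H(X_A)$ because only the $A$-coordinates of $X_S$ are fresh given $X_{\Kcc}$ and the inputs are independent; the second collapses because $X_A\subset X_S$ is already in the conditioning.

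To bound $\sum_{k\in\Kc}R_k=\sum_{k\in A}R_k+\sum_{k\in B}R_k$ by the above expression, I would invoke~\eqref{eq:capacity_under_assumption} twice. Setting $S\mapsto A^{\sf c}$ there yields $\sum_{k\in A}R_k\le\frac{1}{n}I(X_A;Y_{A^{\sf c}}|X_{A^{\sf c}})\le\frac{1}{n}H(X_A)$. Setting $S\mapsto B^{\sf c}=S\cup D$ yields $\sum_{k\in B}R_k\le\frac{1}{n}I(X_B;Y_{S\cup D}|X_S,X_D)$. This is the step where~\eqref{eq:working_assumption} enters: since $S\subset S\cup D$, the assumption gives ${\sf SNR}_{S\cup D}\le{\sf SNR}_S$, and in AWGN this means $Y_{S\cup D}$ can be written as $Y_S$ plus an independent Gaussian perturbation and is therefore stochastically degraded from $Y_S$. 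Because $(X_S,X_D)$ is independent of the channel noise, the data-processing inequality then gives $I(X_B;Y_{S\cup D}|X_S,X_D)\le I(X_B;Y_S|X_S,X_D)$. Adding the two bounds exactly reproduces the decomposition of $I(X_{\Kc};\widetilde{Y}_S|X_{\Kcc})$ displayed above, so the GMAC constraint for $(\Kc,S)$ follows. Edge cases in which $A$ or $B$ is empty are automatic, since the corresponding rate sums and entropy terms vanish.

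The main obstacle is this degradation step: the SNR ordering~\eqref{eq:working_assumption} is an inequality on scalar noise parameters, but the data-processing inequality requires a genuine Markov (physical-degradation) relation among the $Y_S$'s. The AWGN structure of the broadcast channel is precisely what promotes one to the other, and once that is in place the rest of the argument is routine bookkeeping with the chain rule and the independence of the inputs.
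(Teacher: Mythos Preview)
Your proposal is correct and follows essentially the same approach as the paper: both specialize Theorem~\ref{thm:GMAC} to $\Kc=\Sc$ for necessity, then for sufficiency split $\Kc$ along $S$ into $\Kc\cap S$ and $\Kc\cap\Sc$, bound the first rate sum by an entropy term and the second via~\eqref{eq:capacity_under_assumption} with $S\mapsto\Kcc\cup S$ together with the AWGN degradation implied by~\eqref{eq:working_assumption}. The only cosmetic difference is that the paper obtains the entropy bound coordinate-wise (using $S=\{k\}^{\sf c}$ for each $k$ and summing), whereas you apply~\eqref{eq:capacity_under_assumption} once with $S=A^{\sf c}$; both are valid and lead to the same inequality.
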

\begin{proof}
Using~\eqref{eq:Y_tilde} in Theorem~\ref{thm:GMAC}, the capacity region is the set of all rate tuples satisfying
\begin{align}
n \sum_{k \in \Kc} R_k &\leq I(X_{\Kc};Y_S,X_S|X_{\Kcc}) \label{eq:thm:0} \\
   &= I(X_{\Kc};X_S|X_{\Kcc}) + I(X_{\Kc};Y_S|X_S,X_{\Kcc}), \label{eq:thm:1}            
\end{align} 
for every \mbox{$S,\Kc \subset \{1,\dots,K\}$}. 
The inequality~\eqref{eq:capacity_under_assumption} is the same as~\eqref{eq:thm:0} when \mbox{$\Kc=\Sc$}.
In order to prove the theorem it is enough to show that~\eqref{eq:capacity_under_assumption} implies~\eqref{eq:thm:0} for any other choice of $(\Kc,S)$.

We obtain \mbox{$nR_k \leq I(X_k;Y_{\{k\}^{\sf c}}|X_{\{k\}^{\sf c}}) \leq H(X_k)$} by using \mbox{$S=\{k\}^{\sf c}$} in~\eqref{eq:capacity_under_assumption}. 
Hence,~\eqref{eq:capacity_under_assumption} implies
\begin{align}
\sum_{k \in \Kc} nR_k \leq \sum_{k \in \Kc} H(X_k) \text{ for all } \Kc \subset \{1,\dots,K\}. \label{eq:thm:1.5}
\end{align} 

We now simplify each of the two terms in~\eqref{eq:thm:1} for arbitrary $\Kc$ and $S$. 
Utilizing the fact that $X_1,\dots,X_K$ are independent random variables, 
we have
\begin{align}
I(X_{\Kc};X_S|X_{\Kcc}) &= H(X_{\Kc}|X_{\Kcc}) - H(X_{\Kc}|X_S,X_{\Kcc}) \nonumber \\ 
&= H(X_{\Kc}) - H(X_{\Kc}|X_S) \nonumber \\
&= H(X_{\Kc}) - \left( H(X_{\Kc}) - H(X_{S \cap \Kc}) \right) \nonumber \\
&= H(X_{S \cap \Kc}) = \sum_{k \in S \cap \Kc} H(X_k). \label{eq:thm:2}
\end{align} 
The second term of~\eqref{eq:thm:1} can be rewritten as
\begin{align}
I(&X_{\Kc};Y_S|X_{S \cup \Kcc}) \nonumber \\
&= I(X_{\Kc \cap \Sc},X_{\Kc \cap S};Y_S|X_{S \cup \Kcc}) \nonumber \\
&= I(X_{\Kc \cap \Sc};Y_S|X_{S \cup \Kcc}) + I(X_{\Kc \cap S};Y_S|X_{S \cup \Kcc},X_{\Kc \cap \Sc}) \nonumber \\
&= I(X_{\Kc \cap \Sc};Y_S|X_{S \cup \Kcc}), \label{eq:thm:3}
\end{align} 
where the last equality follows from the fact that \mbox{$I(X_{\Kc \cap S};Y_S|X_{S \cup \Kcc},X_{\Kc \cap \Sc})=0$}, since \mbox{$(\Kc \cap S) \subset (S \cup \Kcc)$}. 
Substituting~\eqref{eq:thm:2} and~\eqref{eq:thm:3} in~\eqref{eq:thm:1},
\begin{align} \label{eq:thm:4}
n \sum_{k \in \Kc} R_k &\leq \sum_{k \in S \cap \Kc} H(X_k) + I(X_{\Kc \cap \Sc};Y_S|X_{S \cup \Kcc}).
\end{align}
Rewriting $\sum_{k \in \Kc}R_k$ as the sum $\sum_{k \in \Kc \cap S}R_k + \sum_{k \in \Kc \cap \Sc}R_k$, we observe that~\eqref{eq:thm:4} is the sum of 
\begin{align} 
n \sum_{k \in \Kc \cap S} R_k &\leq \sum_{k \in \Kc \cap S} H(X_k), \label{eq:thm:5} \text{ and} \\ 
n \sum_{k \in \Kc \cap \Sc} R_k &\leq I(X_{\Kc \cap \Sc};Y_S|X_{\Kcc \cup S}). \label{eq:thm:6}
\end{align} 
Since~\eqref{eq:thm:5} follows from~\eqref{eq:capacity_under_assumption} by means of~\eqref{eq:thm:1.5}, we are left with showing that~\eqref{eq:thm:6} follows from~\eqref{eq:capacity_under_assumption}.

Using~\eqref{eq:capacity_under_assumption} with the index set \mbox{$\Kcc \cup S$} instead of $S$, and the working assumption~\eqref{eq:working_assumption} that $Y_{\Kcc \cup S}$ is a stochastically degraded~\cite{CoT_JohnWiley_12} version of $Y_{S}$ (because of the lower ${\sf SNR}$),
\begin{align*}
n\sum_{k \in \Kc \cap \Sc}R_k &\leq I(X_{\Kc \cap \Sc};Y_{\Kcc \cup S}|X_{\Kcc \cup S}) \\
                              &\leq I(X_{\Kc \cap \Sc};Y_S|X_{\Kcc \cup S}).
\end{align*} 
This completes the proof.
\end{proof}

\begin{example} \label{ex:64QAM}

\begin{figure}[!t]
\centering
\psfrag{xaxis}[][][0.75]{${\sf SNR}$ (in dB)}
\psfrag{yaxis}[][][0.75]{Mutual information (in b/dim)}
\psfrag{label1111111}[][][0.6]{$~~~~~\sfrac{1}{2}\log_2(1+{\sf SNR})$}
\psfrag{label2222222}[][][0.6]{$~~~\sfrac{1}{2} \, I(X_1,X_2;Y)$}
\psfrag{label3333333}[][][0.6]{$~~~\sfrac{1}{2} \, I(X_1;Y|X_2)$}
\psfrag{label4444444}[][][0.6]{$~~~\sfrac{1}{2} \, I(X_2;Y|X_1)$}
\includegraphics[totalheight=2.4in,width=3.0in]{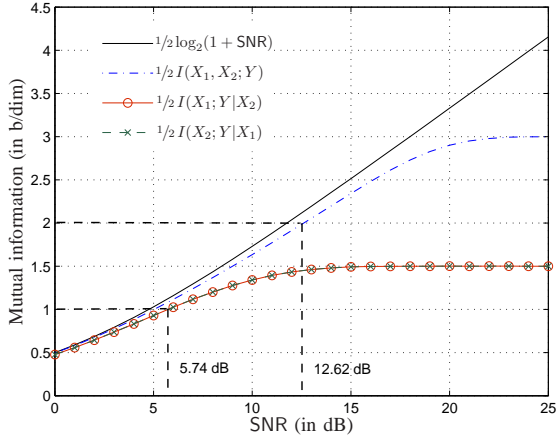}
\caption{Mutual information rates available from using the $64$-QAM index modulation of Example~\ref{ex:64QAM}.}
\label{fig:64QAM_achievable_rate}
\vspace{-3mm}
\end{figure}

Consider \mbox{$K=2$} messages, each of rate $1$~b/dim, to be transmitted across a broadcast channel consisting of \mbox{$2^K-1=3$} receivers with channel outputs $Y_{\varnothing},Y_1,Y_2$ and side information index sets \mbox{$S=\varnothing$}, \mbox{$S=\{1\}$}, \mbox{$S=\{2\}$}, respectively. 
Let \mbox{$\Xc_1=\Xc_2=\{0,1,\dots,7\}$}, and $(\rho,\Xc)$ be the $64$-QAM index modulation defined by the map
$\pmb{x} = \rho(x_1,x_2) = (x_1+2x_2,2x_1+x_2) {\rm~mod}~8$.
By direct computation, we obtain \mbox{$d_0=1$}, and \mbox{$d_S=\sqrt{5}$} for both $S=\{1\},\{2\}$, resulting in \mbox{$\Gamma=4.66$~dB/b/dim}.
Let $X_1$ and $X_2$ be independently and uniformly distributed over $\Xc_1$ and $\Xc_2$, respectively, and \mbox{$X=\rho(X_1,X_2)$}.
From Theorem~\ref{thm:cap}, the rate tuple $(R_1,R_2)$ is achievable with the modulation $(\rho,\Xc)$ if and only if
\begin{align}
&R_1 \leq \sfrac{1}{2} \, I(X_1;Y_2|X_2), 
R_2 \leq \sfrac{1}{2} \, I(X_2;Y_1|X_1), \nonumber \\
&~~~~~~~~~~~~~R_1 + R_2 \leq \sfrac{1}{2} \, I(X_1,X_2;Y_{\varnothing}). \label{eq:ex:inner_code_limit}
\end{align} 
On the other hand, the capacity region with Gaussian input distribution and joint encoding of the sources is~\cite{Tun_IEEE_IT_06}
\begin{align}
&R_1 \leq \sfrac{1}{2} \, \log_2\left( 1 + {\sf SNR}_2\right),
R_2 \leq \sfrac{1}{2} \, \log_2\left( 1 + {\sf SNR}_1\right), \nonumber \\
&~~~~~~~~~~~~R_1 + R_2 \leq \sfrac{1}{2} \, \log_2\left( 1 + {\sf SNR}_{\varnothing}\right). \label{eq:ex:abs_capacity_limit}
\end{align}
The mutual information rates~\eqref{eq:ex:inner_code_limit}, obtained from Monte-Carlo methods, are shown in Fig.~\ref{fig:64QAM_achievable_rate} for a generic channel output $Y$ as a function of ${\sf SNR}$. The figure also shows the curve $\sfrac{1}{2} \, \log_2\left( 1 + {\sf SNR}\right)$ that dictates the absolute capacity~\eqref{eq:ex:abs_capacity_limit} achievable with Gaussian input distribution.
From Fig.~\ref{fig:64QAM_achievable_rate}, we observe that, to achieve \mbox{$(R_1,R_2)=(1,1)$} using the index modulation $(\rho,\Xc)$, the ${\sf SNR}$ requirement at ${\sf Rx}_1$, ${\sf Rx}_2$ and ${\sf Rx}_{\varnothing}$ are $5.74$~dB, $5.74$~dB and $12.62$~dB, respectively. These are within $1$~dB of the absolute ${\sf SNR}$ limits, viz. $4.77$~dB, $4.77$~dB and $11.77$~dB, obtained from~\eqref{eq:ex:abs_capacity_limit}. \hfill\QED
\end{example}

\subsection{Dependence on the side information gain $\Gamma(\Xc)$}

We use Fano's inequality to relate the information rates \mbox{$\sfrac{1}{n} \, I(X_{\Sc};Y_S|X_S)$}, that define the capacity region (cf.~Theorem~\ref{thm:cap}), to the side information gain $\Gamma$ of the modulation scheme used.
Larger values of $I(X_{\Sc};Y_S|X_S)$ imply a larger capacity region.
Let $X_1,\dots,X_K$ be uniformly distributed on $\Xc_1,\dots,\Xc_K$, respectively, and ${\sf P_e}\left(X_{\Sc}|Y_S,X_S\right)$ be the probability of error of the optimal decoder that decodes $X_{\Sc}$ given the values of $X_S$ and  $Y_S$. 
Using Fano's inequality~\cite{CoT_JohnWiley_12}
\begin{equation} \label{eq:Fanos}
H(X_{\Sc}|Y_S,X_S) \leq 1 + {\sf P_e}(X_{\Sc}|Y_S,X_S) \, \log_2 \prod_{k \in \Sc}|\Xc_k|,
\end{equation} 
we obtain the following lower bound,
\begin{align}
I(X_{\Sc};Y_S|X_S) &= H(X_{\Sc}|X_S) - H(X_{\Sc}|Y_S,X_S) \nonumber \\
                   &= \log_2 \prod_{k \in \Sc}|\Xc_k| - H(X_{\Sc}|Y_S,X_S) \nonumber \\
                   &\geq  \left( 1 - {\sf P_e}(X_{\Sc}|Y_S,X_S)\right) \log_2 \prod_{k \in \Sc}|\Xc_k|- 1. \nonumber 
\end{align} 
Thus, from Theorem~\ref{thm:cap} and the above inequality, smaller values of ${\sf P_e}(X_{\Sc}|Y_S,X_S)$, imply larger lower bounds on the achievable rates.
%
For fixed values of ${\sf SNR}_S$, \mbox{$S \subset \{1,\dots,K\}$}, maximizing $\Gamma$ maximizes $d_S$ simultaneously for all \mbox{$S \neq \varnothing$}, and hence minimizes ${\sf P_e}(X_{\Sc}|Y_S,X_S)$ for all \mbox{$S \neq \varnothing$}. 
We thus expect that larger values of $\Gamma$ will lead to higher achievable rates for given $\left({\sf SNR}_S,S \subset \{1,\dots,K\}\right)$, or equivalently, lower minimum ${\sf SNR}$ requirements for a given tuple of message rates $(R_1,\dots,R_K)$.

\begin{example}
We consider two different index modulations over $256$-QAM with \mbox{$K=n=2$} and \mbox{$\Xc_1=\Xc_2=\{0,1,\dots,15\}$}. 
The two schemes, corresponding to the encoding functions \mbox{$\rho(x_1,x_2)=(x_1+12x_2,12x_1+x_2) {\rm~mod}~16$}
and \mbox{$\rho(x_1,x_2)=(x_1+2x_2,2x_1+x_2) {\rm~mod}~16$}, respectively, have 
\mbox{$\Gamma=6.02$~dB/b/dim} and \mbox{$\Gamma=3.49$~dB/b/dim}.
%
%
Assume uniform distribution of $X_1,X_2$ on $\Xc_1,\Xc_2$, respectively.
Using Monte-Carlo methods, we obtain the minimum required ${\sf SNR}$ at ${\sf Rx}_1,{\sf Rx}_2$ and ${\sf Rx}_{\varnothing}$, for each of the two modulation schemes to support \mbox{$R_1=R_2=1.5$}~b/dim. 
The minimum requirements for the two schemes on $({\sf SNR}_1,{\sf SNR}_2,{\sf SNR}_{\varnothing})$ are $(9.5,9.5,19.2)$ and $(11.3,11.3,19.2)$, in~dB, respectively. The first scheme, whose side information gain is larger by a factor of $2.53$~dB/b/dim, can tolerate $1.8$~dB of additional noise at ${\sf Rx}_1$ and ${\sf Rx}_2$. \hfill\QED
\end{example}

\section{Simulation Results and Conclusion} \label{sec:5}

\begin{figure}[!t]
\centering
\psfrag{xaxis}[ct][][0.75]{${\sf SNR}$~(in dB)}
\psfrag{yaxis}[cb][][0.75]{{\rm Bit~error~rate}}
\psfrag{label11111}[][][0.6]{$\, S = \{1\}$}
\psfrag{label22222}[][][0.6]{$\, S = \{2\}$}
\psfrag{label33333}[][][0.6]{$\, S = \varnothing$}
\psfrag{C1}[][][0.75][90]{{\rm Capacity limit} $S=\{1\},\{2\}$}
\psfrag{C2}[][][0.75][90]{{\rm Capacity limit} $S=\varnothing$}
\includegraphics[totalheight=2.5in,width=3.4in]{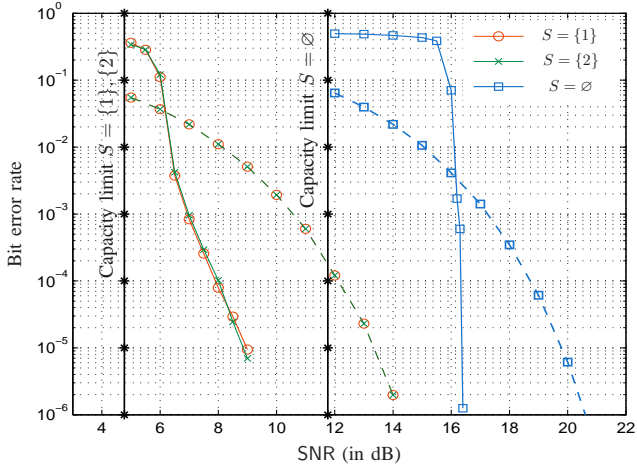}
\caption{
Dashed lines represent $16$-QAM index code of Example~\ref{ex:16QAM_1} without any outer code, solid lines represent $64$-QAM index code of Example~\ref{ex:64QAM} with a rate-\sfrac{2}{3} convolutional code as outer code.}
\label{fig:comparison_all}
\vspace{-3mm}
\end{figure} 

In this section, we demonstrate the utility of coded index modulation in a Gaussian broadcast channel using simulations.
We consider \mbox{$K=2$} messages with \mbox{$R_1=R_2=1$}~b/dim, and $3$ receivers corresponding to $S=\{1\},\{2\}$ and $\varnothing$, respectively.
We present the bit error rate performance for two encoding schemes:
\begin{inparaenum}[\itshape i\upshape)]
\item the $16$-QAM index modulation of Example~\ref{ex:16QAM_1} without any outer code; and
\item the $64$-QAM index modulation of Example~\ref{ex:64QAM} with bit-interleaved coded modulation (BICM)~\cite{CTB_IT_98} using a rate-$\sfrac{2}{3}$, $16$-state convolutional code~\cite{DMW_IT_82}.
\end{inparaenum}
In the first scheme, for \mbox{$k=1,2$}, two information bits from the $k^{\text{th}}$ source are mapped to $\Xc_k$, 
and maximum-likelihood decoding is performed at the receivers.
In the second scheme, a block of $3996$ information bits from the $k^{\text{th}}$ source is encoded using the terminated convolutional code, the resulting $6000$ coded bits are interleaved, and then mapped to the constellation $\Xc_k$, three bits at a time. 
The receivers perform iterative decoding of BICM~\cite{LiR_ElecLet_98} using three soft-input soft-output decoding modules~\cite{BDMP_CommLet_97}: a soft demapper for the $64$-QAM constellation, and two BCJR decoders~\cite{BCJR_IT_74}, one for each convolution coded block. In each iteration, the two BCJR modules exchange extrinsic probabilities through the demapper. The decision on information bits is made after $8$ iterations.

The bit error rate performance of both the schemes, for all three receivers, is shown in Fig.~\ref{fig:comparison_all}. Also shown are the capacity limits~\eqref{eq:ex:abs_capacity_limit} on the ${\sf SNR}$ for \mbox{$R_1=R_2=1$}~b/dim.
At bit error rate $10^{-5}$, the availability of side information provides an apparent ${\sf SNR}$ gain of $6.5$~dB and $7.4$~dB in the two coding schemes, respectively. Further, the BICM-coded system is $4.6$~dB and $4.2$~dB away from capacity for \mbox{$S=\varnothing$} and \mbox{$S=\{1\},\{2\}$}, respectively, and has gained by $3.4$~dB and $4.3$~dB over the uncoded $16$-QAM scheme.

In this paper, we have proposed coded index modulation that separates the problem of coding for utilizing receiver side information from that of coding against channel noise.
This transforms the problem of designing good index codes into two separate problems, viz. constructing index modulations with large side information gain, and designing good channel codes for a noisy multiple access channel. We derived the capacity region of coded index modulation, and demonstrated the potential of this scheme through simulations.

We have shown that index modulations with larger side information gains lead to larger lower bounds on achievable rates. It will be interesting to derive explicit bounds that show that coded index modulation can approach the index coding capacity. 
While we relied on BICM for our simulation results, designing good outer codes that are matched to the index modulations is yet to be addressed. 


\end{document}